\documentclass[twocolumn,floatfix,showpacs,pra,aps, letterpaper]{revtex4}

\usepackage[dvips,bookmarks=false]{hyperref}

\usepackage{amsmath}
\usepackage{amssymb}
\usepackage{graphicx}
\usepackage{epsfig}
\usepackage{color}

\newcommand{\ket}[1]{|#1 \rangle}

\newtheorem{theorem}{Theorem}[section]

\newenvironment{proof}[1][Proof]{\begin{trivlist}
\item[\hskip \labelsep {\bfseries #1}]}{\end{trivlist}}

\begin{document}

\title{Message passing in fault tolerant quantum error correction}

\author{Zachary W. E. Evans$^{}\footnote[1]{Electronic address: z.evans@physics.unimelb.edu.au}$ and Ashley M. Stephens}

\affiliation{
Centre for Quantum Computer Technology, School of Physics\\
University of Melbourne, Victoria 3010, Australia.}
\date{\today}

\begin{abstract}
Inspired by Knill's scheme for message passing error detection, here we develop a scheme for message passing error correction for the nine-qubit Bacon-Shor code. We show that for two levels of concatenated error correction, where classical information obtained at the first level is used to help interpret the syndrome at the second level, our scheme will correct all cases with four physical errors. This results in a reduction of the logical failure rate relative to conventional error correction by a factor proportional to the reciprocal of the physical error rate.
\end{abstract}

\pacs{03.67.Lx, 03.67.Pp}

\maketitle

\section{Introduction}
The effort to design and build a quantum computer is motivated by the discovery of a number of quantum algorithms that are more efficient than their best known classical equivalents \cite{Shor1, Grover1, Aspuru-Guzik1, Farhi1}. Some practical benefit could arise from these algorithms with a quantum computer of only tens or hundreds of qubits, but a quantum computer with thousands of logical qubits will be required to outperform the most powerful classical computers \cite{Meter1}. To mitigate the effects of decoherence and systematic imprecision it is expected that fault tolerant error correction \cite{Shor2, Steane1, Aliferis1} will be required at the expense of additional quantum resources. Motivated by the fact that a limited amount of resources can be allocated to error correction, here we present a new method for error correction which uses message passing to achieve lower error rates than conventional methods that incur the same resource cost.

With the notable exceptions of topological quantum computing \cite{Kitaev1} and the surface code \cite{Bravyi1, Raussendorf1}, most methods for error correction involve concatenation. Our result is in this context and so we briefly review this approach. 

First, information represented by the state of a single physical qubit is encoded in the states of several physical qubits, which together form the logical qubit of an error correction code. Using error correction circuitry constructed from physical operations (including state preparation, measurement, and quantum logic gates such as Hadamard and controlled-NOT) some number of errors affecting the data qubits that make up the logical qubit can be detected and corrected. Typically this is done by interacting the data qubits with some ancillary qubits and then measuring the ancillary qubits to extract a syndrome from which errors are diagnosed. Then, to form a concatenated quantum code, this process is applied recursively - that is, logical qubits are used to encode higher level logical qubits and logical operations are used to construct higher level error correction circuits.

It is known that by concatenating a quantum code, an arbitrarily accurate and arbitrarily large quantum computation can be performed efficiently with faulty components provided that the failure rate of all physical operations is below some threshold \cite{Aharonov1, Aliferis1}. With each level of concatenation the failure rate of the logical operations reduces double-exponentially, however both the number of physical qubits to encode each logical qubit and the time required to perform each logical operation increase exponentially. This increase in resources makes it impractical to use more than just a few levels of concatenation.

Conventional implementations of quantum error correction implicitly assume that errors are equally likely on all data qubits and so the correction is always to apply the fewest possible bit or phase flips required to return the state to the code space. Error correction at each level of concatenation operates independently in this way. However, not only does error correction reduce the probability of a logical error, it also gives some indication of the likelihood that an error has occurred at a higher level of encoding - a logical error is more likely to have occurred if a correction was applied during error correction than if a correction was not applied. If this classical information is passed from lower levels of error correction to higher levels it is possible to correct a larger set of errors. Message passing in quantum computation was first considered in the context of an error detection code, where errors at the level below would become located errors at the next level up and thus be corrected \cite{Knill1}. The idea was also applied to an error correction code in the context of communication across a noisy channel \cite{Poulin1}.

Here we extend this idea by applying message passing to fault tolerant error correction under the nine-qubit Bacon-Shor subsystem code \cite{Bacon1, Bacon2}. In particular we present a method for error correction at the second level of concatenation which uses both the regular syndrome and messages from the first level of concatenation. We refer to this method as message passing error correction (MPEC). Using simulations we find that MPEC will correct all cases with four physical errors, whereas conventional error correction can only guarantee success with up to three physical errors. We also show that MPEC retains fault tolerance and so for physical error rates well below the threshold the use of this method can lead to a significant improvement in the fidelity of a large scale computation relative to conventional error correction.

Note that while message passing error detection has been shown to increase the asymptotic threshold \cite{Knill1, Aliferis2}, our method is designed to improve the performance of error correction without requiring additional resources. We do not explicitly consider any effect that message passing may have on the asymptotic threshold.

\section{Message passing error correction}
\label{sec: method}
In this paper we focus entirely on improving the performance of the second level of concatenated error correction using the [[9,1,3]] Bacon-Shor code. The Bacon-Shor error correction circuits we use are shown in Fig.~\ref{fig: EC circuit} \cite{Aliferis3}.

\begin{figure}
\begin{center}
\resizebox{60mm}{!}{\includegraphics{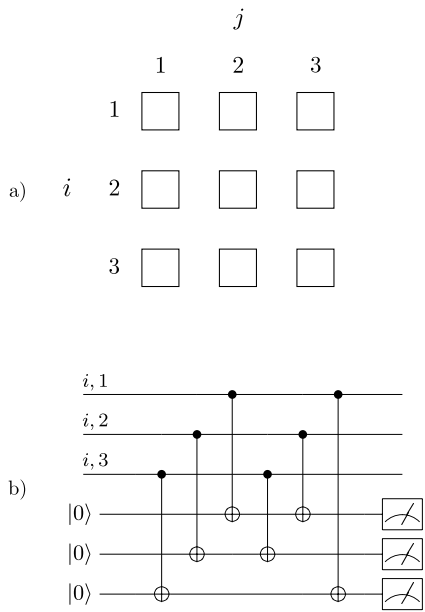}}
\end{center}
\vspace*{-15pt}
\caption{a) The nine data qubits of the Bacon-Shor code can be thought of as the vertices of a 3x3 grid. b) $X$ syndrome extraction circuit \cite{Aliferis3}. The qubit labels indicate that this circuit performs the same operations on each row of data qubits in parallel. $Z$ syndrome extraction has a similar circuit (CNOT gates are reversed and measurement and preparation are in the conjugate basis) which operates on the columns of data qubits.}
\label{fig: EC circuit}
\end{figure}

In the [[9,1,3]] code a logical failure can occur if at least two of the data qubits have errors. When the error correction circuit measures a non-zero syndrome we know that at least one of the data qubits has an error and that any additional error may cause failure. To have a logical error and observe no syndrome requires at least three physical errors. Therefore logical errors are of order $1/p^2$ more likely at logical circuit locations which measured non-zero syndrome. At these level-1 locations, which have detected an error at the physical level, we raise a \emph{flag} to indicate an increased probability of level-1 error. These flags are simply classical information that can be tracked through the quantum circuit.

Bit ($X$) and phase ($Z$) errors can be treated independently. There will be $X$ flags and $Z$ flags each indicating increased probability of the respective errors. Since the $X$ and $Z$ error correction circuits are similar the same rules apply independently to both $X$ and $Z$ correction. Table \ref{tab: outcomes} shows the four possible outcomes of a level-1 error correction block.

\begin{table}
\centering
\begin{tabular}{c|c|c}
\hline\hline
unflagged success  	& 	$US$ & $\mathcal{O}(p^0)$ \\
flagged success 	& 	$FS$ & $\mathcal{O}(p^1)$ \\
flagged failure 		& 	$FF$ & $\mathcal{O}(p^2)$ \\
unflagged falure 	& 	$UF$ & $\mathcal{O}(p^3)$ \\
[1ex]
\hline
\end{tabular}
 \caption{Possible outcomes of a level-1 error correction block. The error correction block is \emph{flagged} if a non-zero syndrome is measured, indicating that at least one error has occurred. Error correction can only detect deviations from the code space, and so logical failures can not be detected without higher level error correction. In terms of the probability of a physical gate failure, $p$, the weights of the outcomes are $0, 1, 2$ and $3$ respectively.}
\label{tab: outcomes}
\end{table}

Using conventional concatenation each level of error correction works independently. The $n^{th}$ level of error correction can correct any single gate failure at the level below. All gates in the error correction circuit are assumed to be equally likely to fail, and so when a non-trivial error syndrome is measured the lowest possible number of errors is assumed and the corresponding correction is applied. Information about the syndrome and the corrections applied is then thrown away. In our method we use this same information to estimate the relative probability of gate failures at a higher level. The key ideas are that more errors can be corrected if the locations of the errors are known and that logical failure is more likely when a non-trivial correction is applied. That is, there is a significant correlation between flags and logical failure.

Codeword states with two errors share the same set of syndromes as codeword states with single errors. However, with the additional flag information, is it possible to distinguish between a larger set of errors. We call the combined information of the flags and the measured syndrome the \emph{super syndrome}. At the second level of error correction, the syndrome of a single $FF$ may be indistinguishable from that of two $FF$s, and so without message passing, four physical errors (if they result in two $FF$s) can cause a logical error at level 2. When the full super syndrome is considered one $FF$ is distinguishable from two $FF$s and so we can attempt to correct two level-1 errors at once. Cases with two $FF$s can be accurately diagnosed and corrected. $1 UF + 2 FS$ may share the same super syndrome with two $FF$, but $1 UF + 2 FS$ is weight $5$ in terms of the physical error rate. It turns out that all weight 4 cases can be corrected using MPEC.

When the error correction circuitry itself is made from noisy gates the problem is significantly more complicated. The error correction circuits are designed to be fault tolerant - a fault at any given physical location will not spread to multiple data qubits in the same logical qubit. However, this type of fault tolerance does not guarantee the protection of the properties that make message passing assisted error correction superior. For example, consider an encoded level-1 CNOT that is part of a level-2 error correction circuit. A single physical error during this CNOT could result in a flag being raised on one or both of the qubits depending on the location of the error. Two physical errors could result in $US+FF$, $US+FS$, $FS+FS$, $FF+FF$, etc. (Fig.~\ref{fig: CNOT errors}). With three physical errors, all combinations of flags and failures are possible. Therefore not all two $FF$ cases are weight 4 for example. This effect complicates the process of interpreting the super syndrome and ensuring fault tolerance. We found that not all fault tolerant circuits for conventional error correction can be readily adapted to MPEC. 

\begin{figure}
\begin{center}
\resizebox{60mm}{!}{\includegraphics{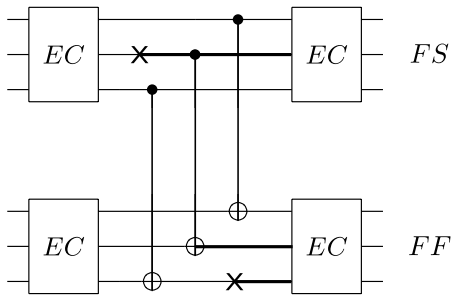}}
\end{center}
\vspace*{-10pt}
\caption{Physical errors during a level-1 two-qubit gate can result in many different combinations of flags and failures on the two logical qubits depending on where the errors occur. In this example there are two physical errors, one of the errors affects both level-1 qubits and the other error only affects the bottom level-1 qubit. Both of the trailing EC boxes will see a non-zero syndrome and so each will raise its flag. In this case the final result will be \em{flagged success} on the top qubit and a \em{flagged failure} on the bottom qubit. In general, each physical error can affect one or both logical qubits. Any combination of $US$, $FS$, $FF$ and $UF$ is possible as long as the error weight one each logical qubit is not more than the number of physical errors.}
\label{fig: CNOT errors}
\end{figure}

To provide the best possible error correction, we aim to apply a correction corresponding to the most likely cause of the super syndrome. We assume a low physical error rate, and so the most likely cause is the one with the fewest physical errors.

Here is an overview of how our scheme works: The error correction circuit used is shown in Fig.~\ref{fig: EC circuit}. $X$ and $Z$ error correction are treated separately. There are $X$ flags and $Z$ flags, which represent potential $X$ and $Z$ errors respectively. A flag is raised on a level-1 qubit whenever lower level error correction measures a non-zero syndrome. Each flag has its own identity. The propagation of each flag is tracked through the circuit as if it were an actual error, spreading through CNOT gates. $X$ flags copy downwards through CNOT gates and $Z$ flags copy upwards. When the second level syndrome is measured we look for combinations of one, two, or three flags on the ancilla which if corresponded to actual errors would produce the syndrome that was measured. Such a combination is called a \emph{flag match}. (For examples of this see figures in Appendix A.) If a flag match is found, we apply corrections to the data qubits that share the same flags as used in the flag match. Some super syndromes have more than one flag match, in these cases we preference the match which implies the fewest errors and \emph{results in the fewest corrections to the data qubits}. If no flag match is found, we fall back to the default error correction action; ie. we correct based on the syndrome alone. After error correction, all flags found on the ancillary qubits at the point of measurement are cleared from the data qubits - these flags have either been established to be $FF$ (and have been corrected) or $FS$, so they are of no further use.

Note that although $3 FS + 1 UF$ has equal weight to $3 FF$, triple flag matches must still be preferenced over the default correction to avoid particular cases in which four physical errors would otherwise cause level-2 failure due to certain combinations of 2-qubit gate failures. (See Appendix A.)

Further levels of concatenated error correction will compound the benefits of MPEC. However, making use of message passing at all levels is not so easy. Finding a flag scheme that works at all levels is non-trivial, and so for this paper we have focused on just two levels of error correction. A simple method for concatenation beyond level 2 is to alternate between error correction that provides flags for the next level, and error correction that uses flags from the previous level but provides no flags. For this alternating sequence the number of physical errors required for logical failure at increasing levels of concatenation is $1, 2, 5, 10, 25, 50, 125$; in contrast with the regular scaling, $1, 2, 4, 8, 16, 32, 64$, for the same physical resources.

\section{Fault tolerance}
\label{sec: fault tolerance}
The procedure described in the previous section ensures that \emph{no combination of four physical errors will result in a level-2 logical error}. This is an improvement over conventional concatenation which can only guarantee the correction of three physical errors at level-2. However, to show that our scheme is effective for large circuits we must show that an arbitrary length chain of level-2 locations will not fail for any combination of four physical errors in each level-2 extended rectangle. An extended rectangle is defined to be an encoded gate with error correction before and after \cite{Aliferis1}.

Usually, fault tolerance of error correction circuits is considered from the point of view that all levels act independently. The basic rule is that a single error should not spread to multiple errors on the same data qubit. For MPEC the fault tolerant condition is more complicated. Since corrections are based on the entire super syndrome we must take into account the error weight of each of the possibilities in Table \ref{tab: outcomes} rather than just counting the failures.

To ensure fault tolerance, an MPEC box must satisfy the following condition:
\begin{equation}
a + b \leq 4 \Rightarrow c \leq b
\label{eq: FT rule}
\end{equation}
where $a$ is the error weight of the state entering the MPEC box, $b$ is the weight of errors occurring inside the box, and $c$ is the error weight of the state leaving the box after any corrections have been applied. Error weights are calculated according to Table \ref{tab: outcomes}. Evidence that the MPEC scheme satisfies the condition can be found in Appendix A. A proof that this condition guarantees the performance of the error correction can be found in Appendix B.

\section{Simulation}
\label{sec: simulation}
To test our message passing error correction we simulate a level-2 CNOT extended rectangle. We compare the failure rates with and without message passing. The failure rate of this circuit is meant to approximate the failure rate of the level-2 CNOT which might be an algorithmic location.

To simulate the error correction circuit we need only simulate the propagation of errors that occur during the circuit \cite{Steane3}. This avoids having to store the complete state of the computer which we assume is an arbitrary codeword state perturbed by the errors. The circuit is deemed to fail if a readout of the data qubits at the end of the circuit in either the $X$ or $Z$ basis would not produce the correct output. Equivalently, a circuit is defined to have succeeded if an errorless error correction cycle applied to its output state would produce the correct state.

For $p>4\times10^{-5}$ we perform many simulations of the entire two level circuit with a fixed error rate and tally the failures. At each physical circuit location we apply an error with probability $p$. For all single qubit locations (preparation, measurement, memory) the error is a randomly selected single qubit Pauli error, $X$, $Y$, or $Z$. For two-qubit locations (CNOT) the error is a randomly selected two-qubit Pauli error ($X\otimes I$, $X\otimes X$, $X\otimes Y$, etc.) Errors at all locations are independent.

For error rates below $4\times10^{-5}$ the number of trials required to generate sufficiently accurate statistics directly is too large; so we use an alternative approach. Instead of giving each circuit location some probability of error, we simulate the circuit with exactly $i$ errors placed randomly. This is repeated many times for $i=[4,12]$ to generate the probability $r_i$ that the circuit fails given that there were $i$ errors. These conditional probabilities can be combined to give the failure rate of the circuit as a function of $p$,
\begin{equation}
p_{(2)} = \sum^N_{i=0}{r_i {N\choose i}p^i(1-p)^{N-i}},
\label{eq: expansion}
\end{equation}
where $N$ is the number of locations in the entire circuit, 72657.

To ensure randomness over the large number of trials required we use the SIMD-oriented Mersenne Twister pseudo random number generator \cite{Saito1}. The results of the simulations are in Fig.~\ref{fig:results}. Note that Eq.~\ref{eq: expansion} is truncated after $i = 12$ after which $r_i$ is set to equal zero. This is why the curves drop to zero at higher error rates. A second set of lines connect the direct data points taken above $p = 4\times10^{-5}$.
\begin{figure}
\begin{center}
\resizebox{88mm}{!}{\includegraphics{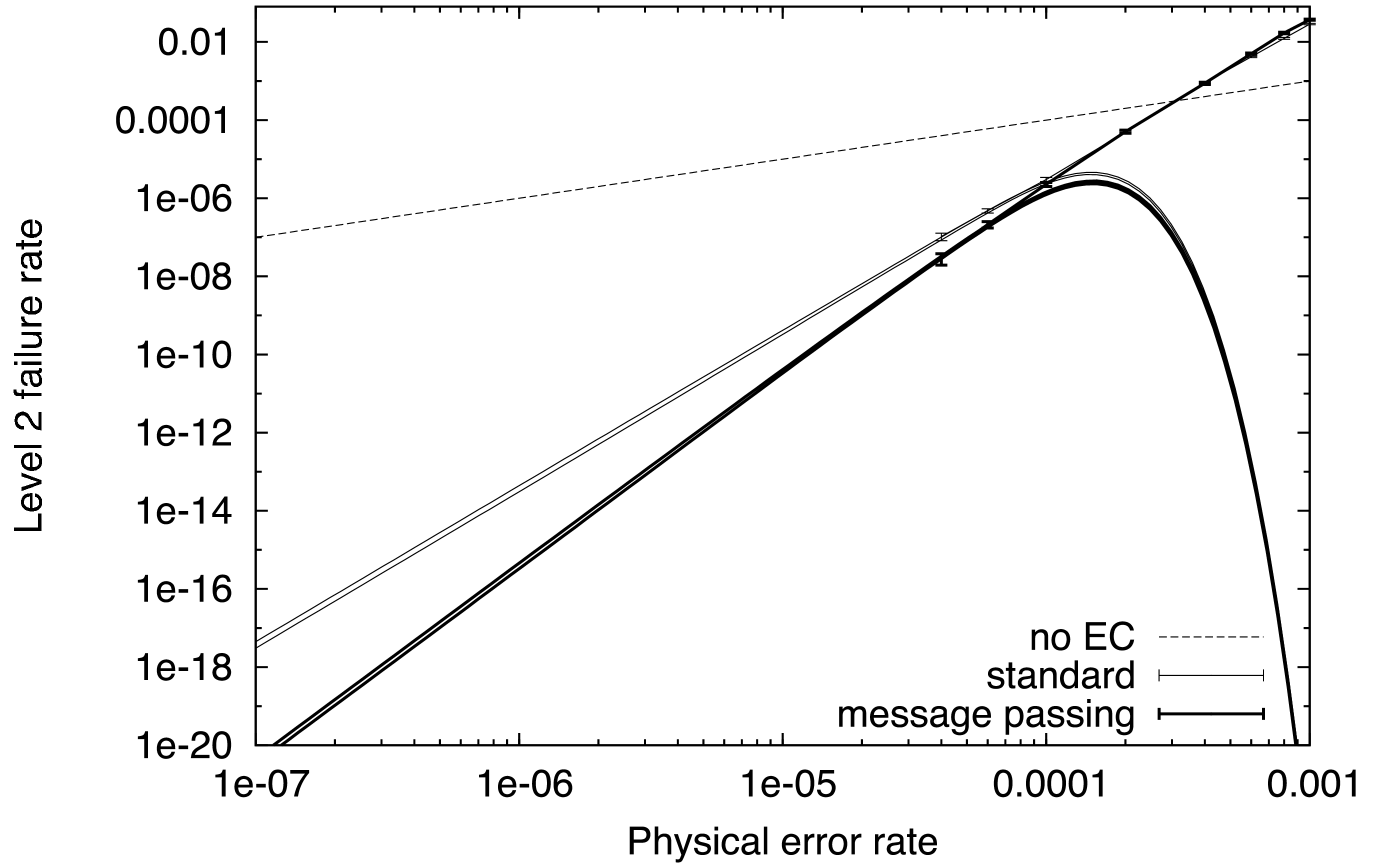}}
\end{center}
\vspace*{-10pt}
\caption{Level-2 logical failure rate vs physical error rate with and without message passing. The data points at high error rates come from direct simulations of a level-2 CNOT extended rectangle, where the error bars indicate $\pm 2\sigma$ statistical error. The lines at low error rates are given by Eq.~\ref{eq: expansion}, where $r_i$ are obtained from simulations for $i\le12$ and set to zero for $i > 12$. To show the statistical error in the parameters $r_i$ there are two lines for both standard and message passing error correction. The top line in each pair is $2\sigma$ above the mean and the bottom is $2\sigma$ below the mean.}
\label{fig:results}
\end{figure}

Our simulations indicate that all combinations of four errors are corrected by MPEC. This is in contrast to standard error correction which fails for some four error combinations. For values of $p$ close to the threshold, $p_{th}$, the failure rate is not significantly reduced, but for $p\ll p_{th}$ the failure rate is reduced by a factor of 
\begin{equation}
\frac{p^{std}_{(2)}}{p^{mp}_{(2)}} \approx \frac{r_4^{std}}{r_5^{mp}}\frac{5}{Np} \approx \frac{10^{-4}}{p}.
\end{equation}
The superscripts indicate standard error correction and message passing error correction.

\section{Conclusions and further work}
In the region of interest - for physical error rates where two or three levels of error correction is sufficient for logical computation - MPEC is significantly better than standard quantum error correction. We emphasize that this benefit is achieved only at the cost of additional classical processing. For more than two levels of error correction the benefit can be compounded, but the problem of finding an optimal and general flagging scheme (one that can be applied at all levels) is an open question and the subject of further work. Also of interest is that the complexity of correctly interpreting the super syndrome appears to depend on the circuits that are used to extract the syndrome. Can MPEC work fault tolerantly for other codes such as the Steane code \cite{Steane1} and if so is it compatible with the most compact Steane circuits \cite{DiVincenzo1}?

\section*{Acknowledgments}
We acknowledge the assistance of Anthony Agius, Geoff Attwater, Magdalena Carrasco, Andrew Greentree, Charles Hill, and Lloyd Hollenberg. We acknowledge the support of the Australian Research Council (ARC), the US National Security Agency (NSA), and the Army Research Office (ARO) under contract number W911NF-04-1-0290.
\bibliographystyle{unsrt}

\appendix
\section{}
\label{sec: Ap. A}
The primary purpose of this section is to demonstrate that the circuits and rules described in Section \ref{sec: method} satisfy the fault tolerance condition, Eq.~\ref{eq: FT rule}. In doing so we also give examples of flag propagation and flag matching which could be useful for a general understanding of the MPEC scheme.

It is important to note that the [[9,1,3]] Bacon-Shor subsystem code has some symmetries which can be used to greatly simplify the fault tolerance analysis. The nine data qubits form a 3x3 grid for which the rows essentially form a repetition code that protects against $X$ errors and the columns a repetition code that protects against $Z$ errors (Fig.~\ref{fig: EC circuit}). The code is defined to take the binary addition of the columns/rows for the syndrome, so that any even number of $X$ errors on a column is benign and even number of $Z$ errors on a row is benign. The entire error correction circuit consists of three sets of the circuit shown in Fig.~\ref{fig: EC circuit} run in parallel followed by three sets of the equivalent $Z$ syndrome extraction. As $X$ and $Z$ error correction are similar and operate independently, and all errors and flags from circuits performed in parallel are added in binary at end of the syndrome extraction, only a single block (the circuit in Fig.~\ref{fig: EC circuit}) need be considered. All errors and flags on parallel circuit blocks are equivalent to the same errors and flags on a single block. Although these symmetries are used for our explanations and diagrams, our simulations always used the full circuits with both $X$ and $Z$ errors.

CNOT gates copy $X$ errors from the control to the target, and copy $Z$ errors from the target to the control.
\begin{align}
\begin{aligned}
CNOT.X_{ctrl}\ket{\psi} & = & X_{ctrl}.X_{targ}.CNOT\ket{\psi} \\
CNOT.X_{targ}\ket{\psi} & = & X_{targ}.CNOT\ket{\psi} \\
CNOT.Z_{ctrl}\ket{\psi} & = & Z_{ctrl}.CNOT\ket{\psi} \\
CNOT.Z_{targ}\ket{\psi} & = & Z_{ctrl}.Z_{targ}.CNOT\ket{\psi}
\end{aligned}
\end{align}
Flags represent potential errors and so they are propagated in the same way that the errors are. Each flag that is raised is given a unique identity, but they can by copied to multiple qubits by CNOT gates in the same way that errors are. For the diagrams in this section the flags are distinguishable by their patterns.

A set of flags is said to match the syndrome if identifying those flags as errors can exactly describe the measured syndrome. When testing for a flag match, each ancillary qubit has its flags added in binary, flags that are not part of the set to be tested are ignored, if the result of all the ancillary qubits is identical to the syndrome then the set of flags is a match.

The figures shown in this section depict the propagation of flags and errors on the second level of error correction. It is implicit that at every circuit location there is a lower level of error correction being performed which can raise flags. These flags are indicated by the flags drawn on the circuits. Crosses indicate a level-1 error (or correction at the end of the circuit). The propagation of the errors is shown using bold lines. At the end of the circuit, the figures show the values of the syndrome measurements and the flags that have propagated onto the ancilla. The flags used in a flag match are highlighted and corrections are shown by crosses at the end of the circuit.

The flags and crosses result from physical errors during the lower level error correction. The error weights of these locations can be inferred from Table \ref{tab: outcomes}. Errors during the CNOT gates can result in flags and errors on both qubits (Fig.~\ref{fig: CNOT errors}). In these cases the number of physical errors during the CNOT limit the error weight of each logical qubit.

To demonstrate that Eq.~\ref{eq: FT rule} is satisfied (ie. that the error weight of the output from an EC box is no greater than the number of errors occurring inside the box) we reason on a case by case basis with cases grouped according to the weight of the incoming errors.

\subsection{Failures cases, $a + b \ge 5$}
To understand how MPEC succeeds on a case by case basis, it is instructive to first examine the typical failure cases.

Fig.~\ref{fig: MPEC fail1} shows how a single $FS$ can cause failure by masking the flags of two $FF$s. For this failure to occur, the two $FF$s and the $FS$ must each affect different qubit lines (rows for $Z$ errors, columns for $X$ errors).

Fig.~\ref{fig: MPEC fail2} shows that a pair of $FS$s can match the syndrome of a $UF$ and thus cause the wrong correction to be applied. Again, each of the $FS$s and the $UF$ must affect different qubit lines. If there is a flag from an $FS$ that follows the same path as a $UF$, the combination of the $UF$ and the $FS$ is equivalent to an $FF$.

It is worth noting that no number of $FS$s alone can ever cause failure.

\begin{figure}[h!]
\begin{center}
\resizebox{80mm}{!}{\includegraphics{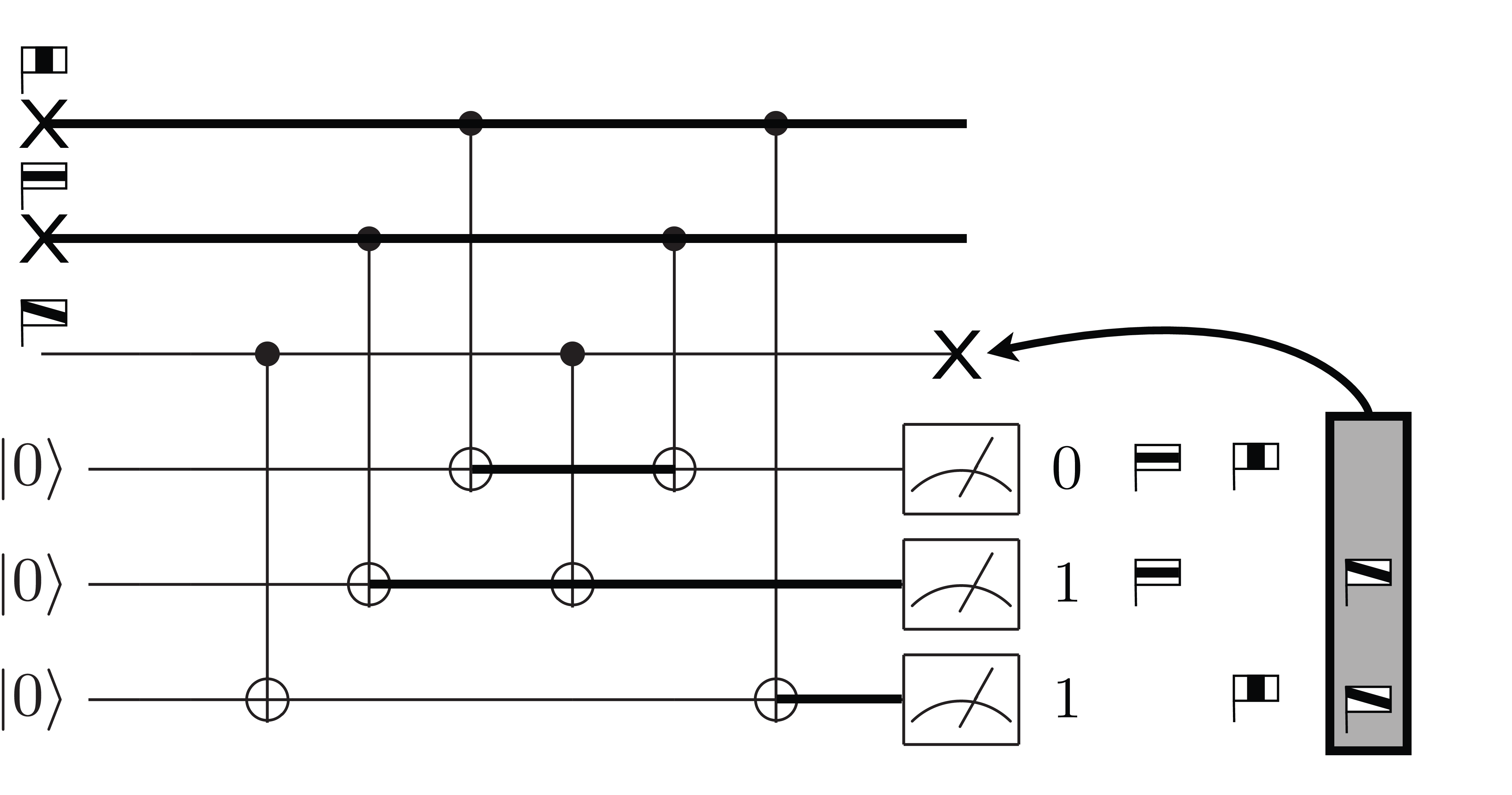}}
\end{center}
\vspace*{-10pt}
\caption{MPEC failure from a error weight 5 case with two incoming $FF$s and one $FS$. In this case the single flag from the $FS$ matches the syndrome and is thus preferred over the pair of flags from the $FF$s. This will result in an incorrect correction and hence logical failure. In this particular example conventional error correction and MPEC would both fail.}
\label{fig: MPEC fail1}
\end{figure}

\begin{figure}[h!]
\begin{center}
\resizebox{80mm}{!}{\includegraphics{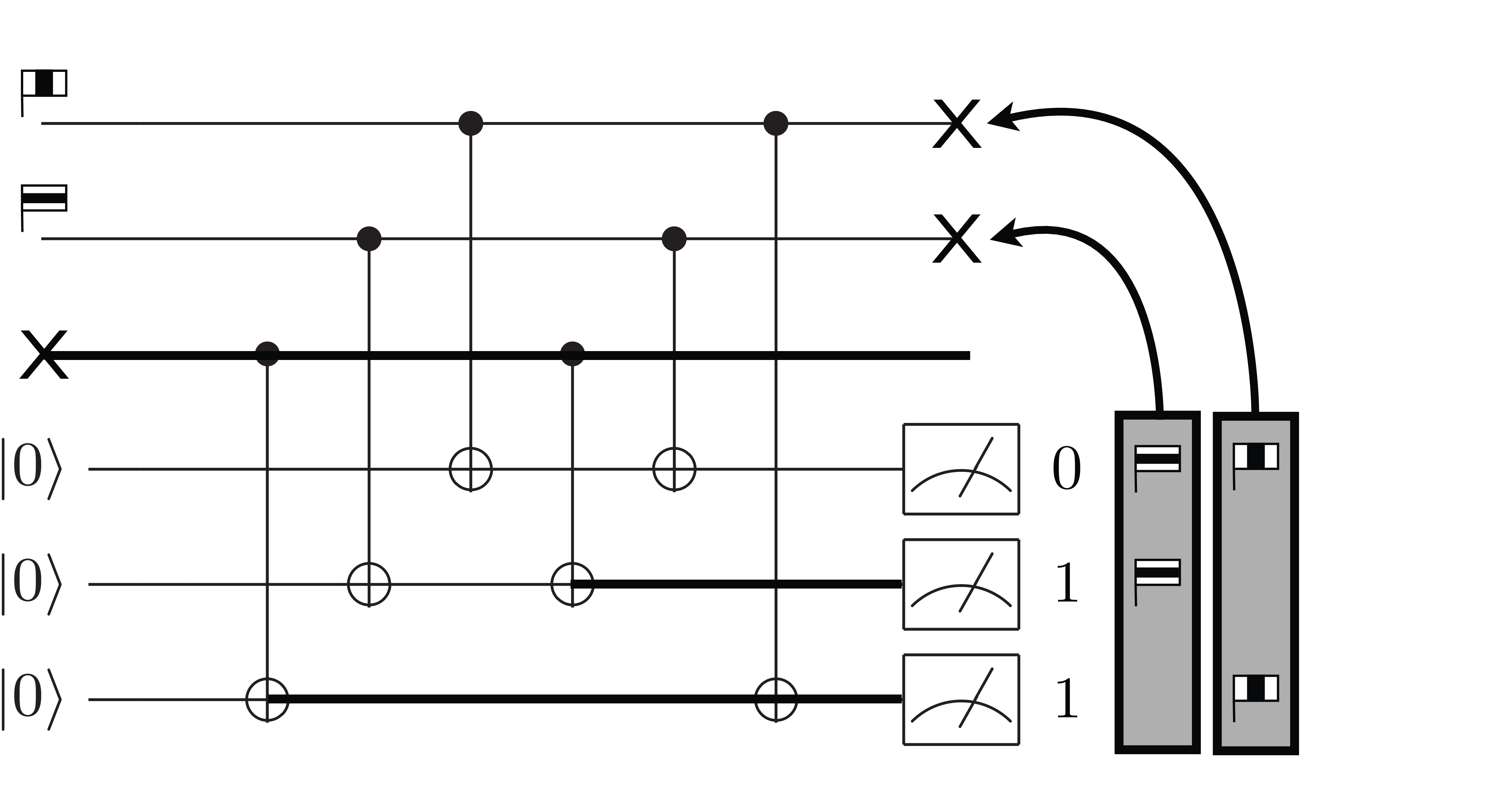}}
\end{center}
\vspace*{-10pt}
\caption{MPEC failure from a error weight 5 case with two incoming $FS$s and one $UF$. The two flags match the syndrome and so are assumed to be errors. This causes MPEC to apply the wrong correction and results in logical failure. This case is of particular interest because it is an example of where conventional error correction would succeed but MPEC fails.}
\label{fig: MPEC fail2}
\end{figure}

\subsection{Cases with $a = 4, b = 0$}
All incoming flags will be passed down to the ancillary qubits and hence, by the rules of MPEC, be cleared from the data after the syndrome measurement.

Any incoming $UF$ will be corrected by conventional error correction rules, as a single $FS$ is not enough to interfere.

Two incoming $FF$s is the exact case that MPEC was designed to deal with. The flags will match the syndrome and both errors will be corrected.

A single incoming $FF$ cannot be masked by one or two $FS$s. The two $FS$s can be positioned to make a two-flag match, but MPEC will always prefer to use the single flag match which is provided by the $FF$.

\subsection{Cases with $a = 3, b \le 1$}
This time, we have the freedom to place a single $FS$ anywhere within the circuit. But this doesn't affect any of arguments given for $a = 4$. An $FS$ cannot interfere with the correction of a single incoming $FF$ or $UF$ no matter where it is placed. One of the important properties of the EC circuit we have chosen is that 
no single $FS$ can mask an error to prevent its correction, because each incoming error on the data will affect two of the ancillary qubits. This property is not a feature of all (conventionally) fault tolerant circuits.

An $FS$ on the data towards the end of the circuit will cause a the $FS$ to be passed out of the EC box.

\subsection{Cases with $a = 2, b \le 2$}
With the up to two errors within the circuit we now have the ability to make an $FF$, or even a pair of $FF$s at a CNOT.

In the worst case there is an incoming $FF$ and a pair of $FF$s that arise from errors during a level-1 CNOT (Fig.~\ref{fig: MPEC2}); to successfully correct this we must use a triple flag match.

CNOT errors can that result in $FF+FS$ pairs in addition to an incoming $FF$ (Fig.~\ref{fig: MPEC3}), but not in any configuration that will cause problems. The $FS$ and $FF$ from CNOT failure can be thought of as an effect on the same data qubit line. With this in mind it is clear that no situation like Fig.~\ref{fig: MPEC fail1} can occur without extra errors.

An $FF$ on the data towards the end of the circuit will be passed out of the EC box.

\begin{figure}[h!]
\begin{center}
\resizebox{80mm}{!}{\includegraphics{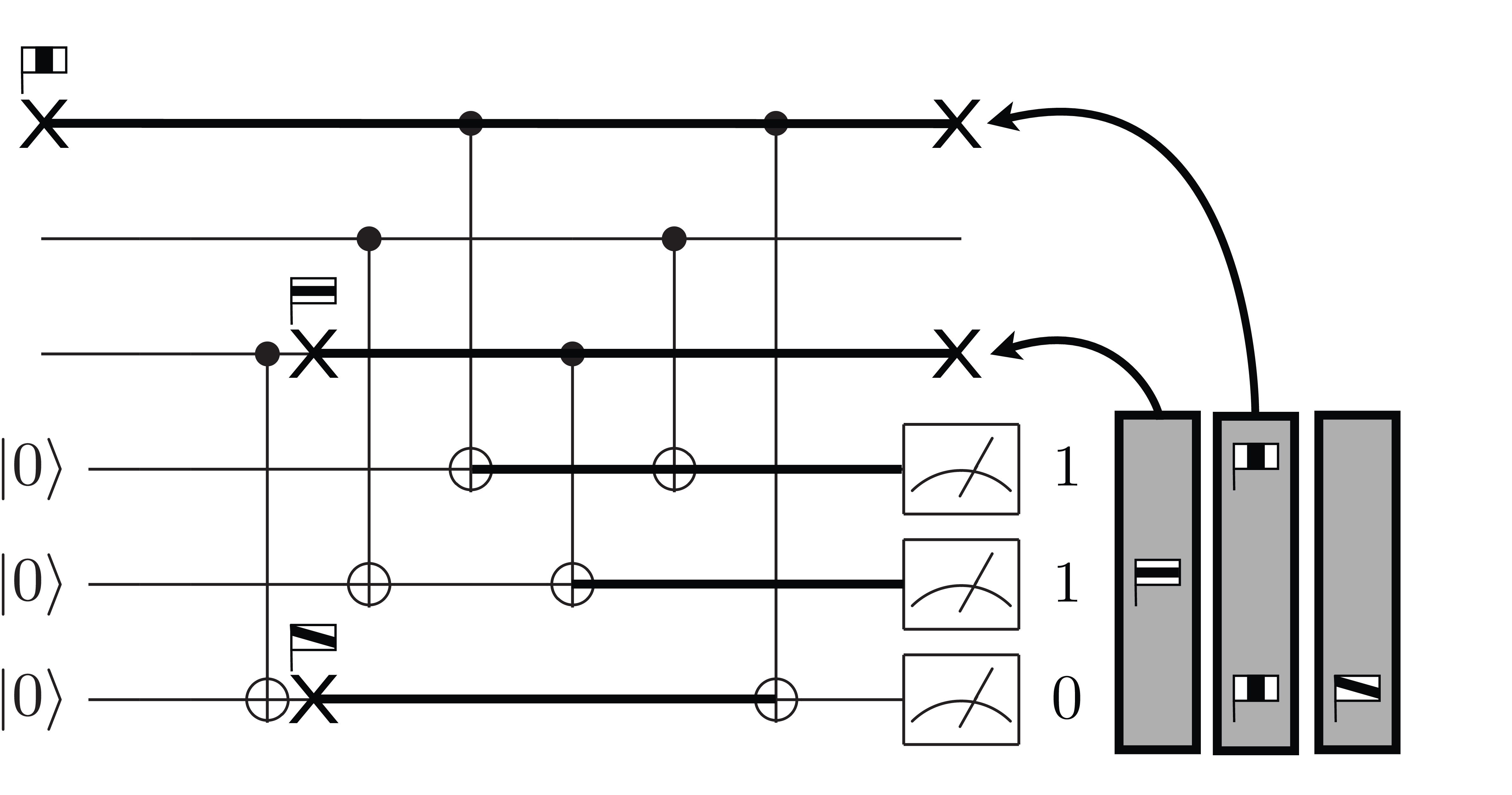}}
\end{center}
\vspace*{-10pt}
\caption{Incoming $FF$ with $FF+FF$ pair from a CNOT failure is successfully diagnosed and corrected with a three-flag match.}
\label{fig: MPEC2}
\end{figure}

\begin{figure}[h!]
\begin{center}
\resizebox{80mm}{!}{\includegraphics{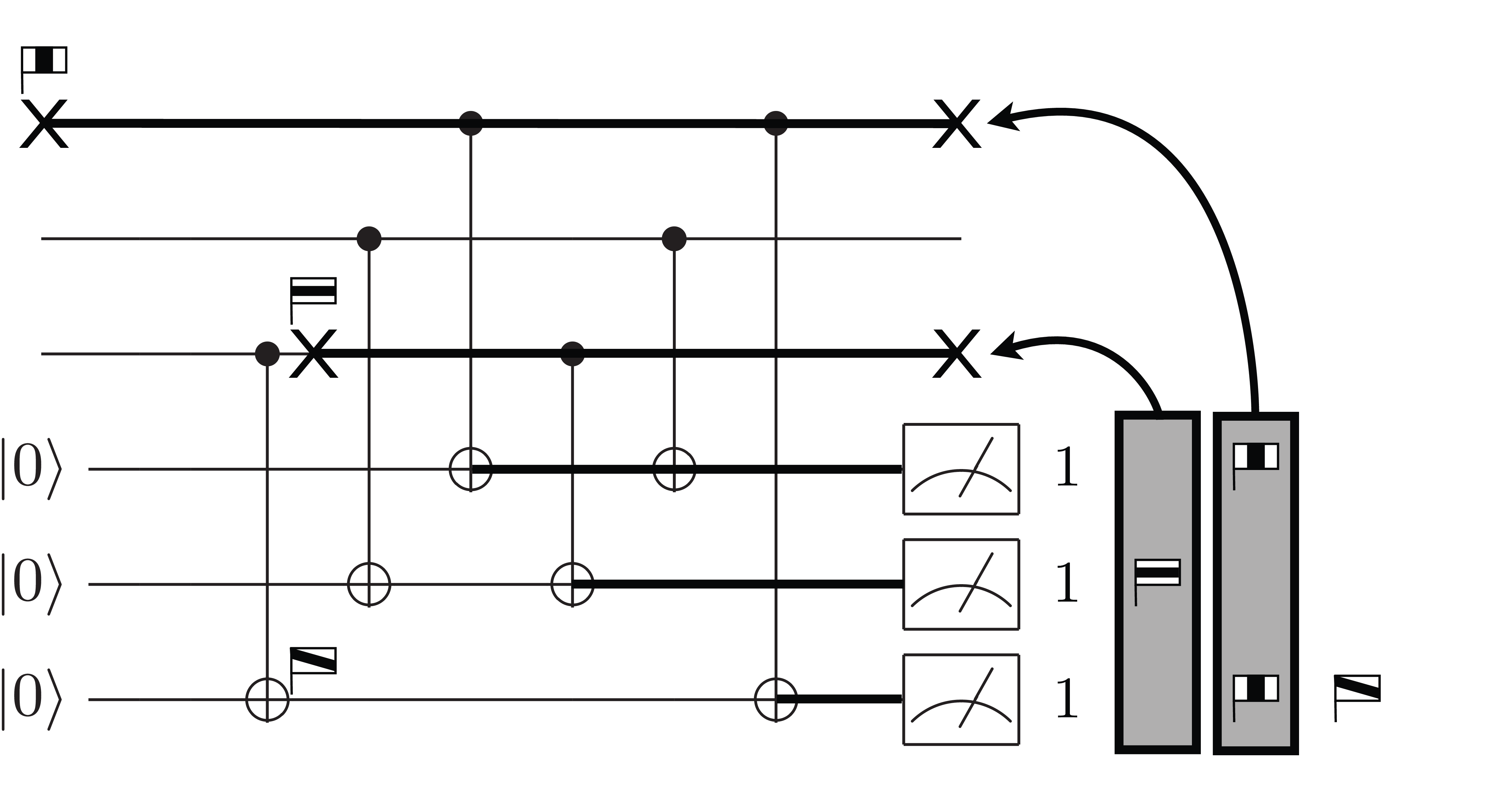}}
\end{center}
\vspace*{-10pt}
\caption{Incoming $FF$ with $FF+FS$ pair from a CNOT failure is successfully diagnosed and corrected. The horizontal and vertical flags together match the syndrome.}
\label{fig: MPEC3}
\end{figure}

\subsection{Cases with $a = 1, b \le 3$}
With $a = 1$ there must be an incoming $FS$ on one of the data qubits. Three physical errors inside the EC box is still only enough to cause failure at one location, but this time there may be a $UF$.

If flag from the $FS$ matches the syndrome, this will result in successful correction. There can be a $UF$ that produces a syndrome with no flag match. When there is no flag match MPEC falls back to the conventional Bacon-Shor error correction rules for this circuit. This will either result in a successful correction of the failure (if it occurs before or during the first set of CNOTs) or a parity mismatch on the syndrome and hence no correction. With $b=3$ it is acceptable to let a $UF$ or $FF$ leave the EC box, eg. Fig.~\ref{fig: MPEC4}.

\begin{figure}[h!]
\begin{center}
\resizebox{80mm}{!}{\includegraphics{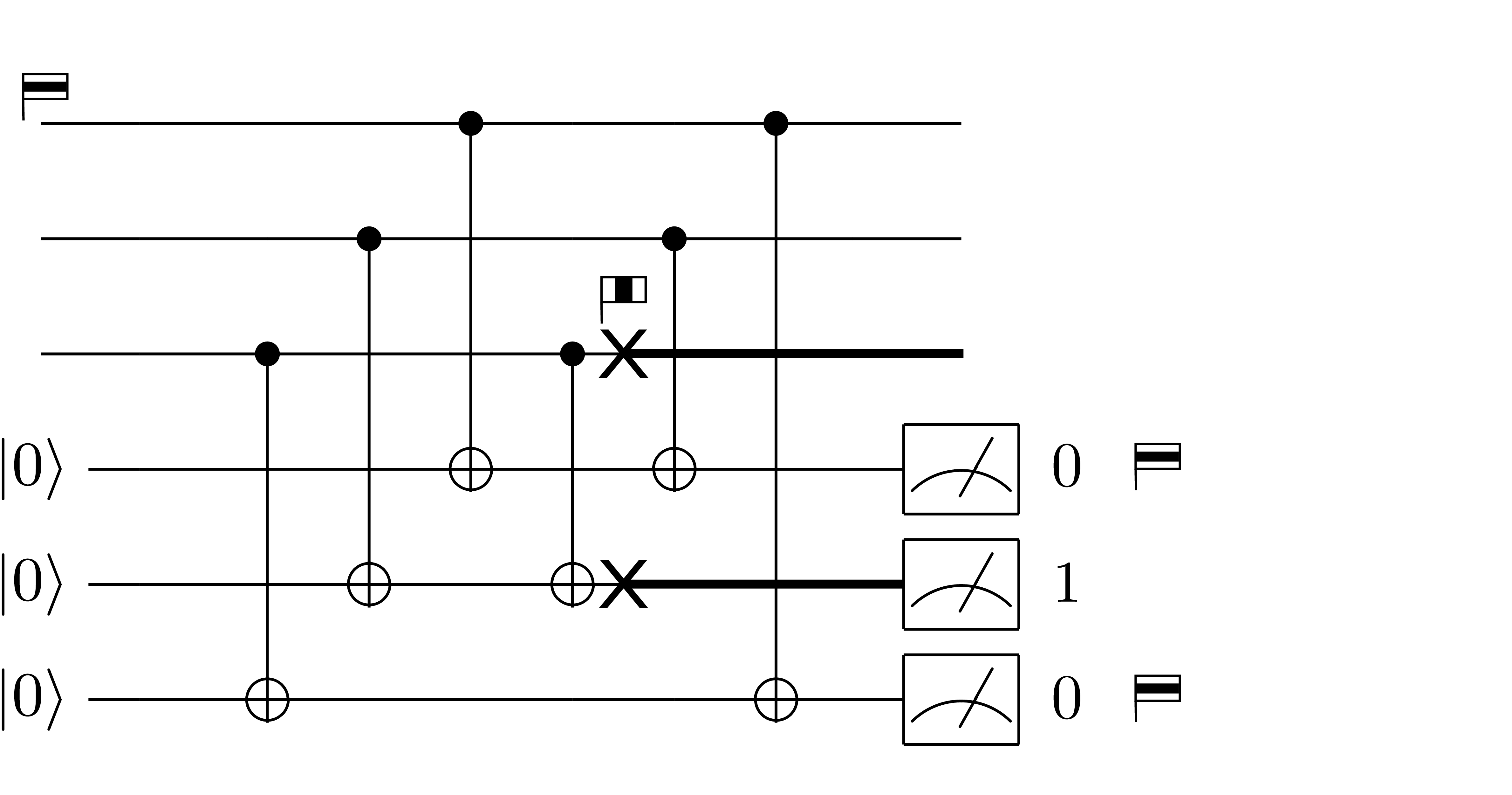}}
\end{center}
\vspace*{-10pt}
\caption{Incoming $FS$ with $FF+UF$ pair from a CNOT failure produces a syndrome which cannot be matched with any combination of the two flags and so the syndrome must be interpreted without the use of the flags. In this case, there is a parity mismatch in the syndrome and so the standard procedure is that no correction should be applied. The error on the third data qubit is let through. The vertical flag does not appear on the ancillary qubits and so it is not removed from that data. Therefore there is an $FF$ leaving this EC box.}
\label{fig: MPEC4}
\end{figure}

\subsection{Cases with $a = 0, b \le 4$}
With $b = 4$ there can either be two separate flagged failure locations, or one $UF$ location and one $FS$ location. As we have already seen, the $UF$ cases are all work in the same way as they do for conventional error correction. The worst case is shown in Fig.~\ref{fig: MPEC5}, where two pairs of physical errors case two $FF$ pairs. When this happens there will be a two flag match which will not correct all of the errors but will not cause logical failure. Since $b = 4$ it is acceptable for the EC box to let through a single $UF$ or even two $FF$s.

With $b = 4$ there are still not enough errors to affect two data qubits with errors and have a flag on the third line (Fig.~\ref{fig: MPEC fail1}) or to have a $UF$ on one data qubit that matches misleading flags on the other two qubit lines (Fig.~\ref{fig: MPEC fail2}).
\begin{figure}[h!]
\begin{center}
\resizebox{80mm}{!}{\includegraphics{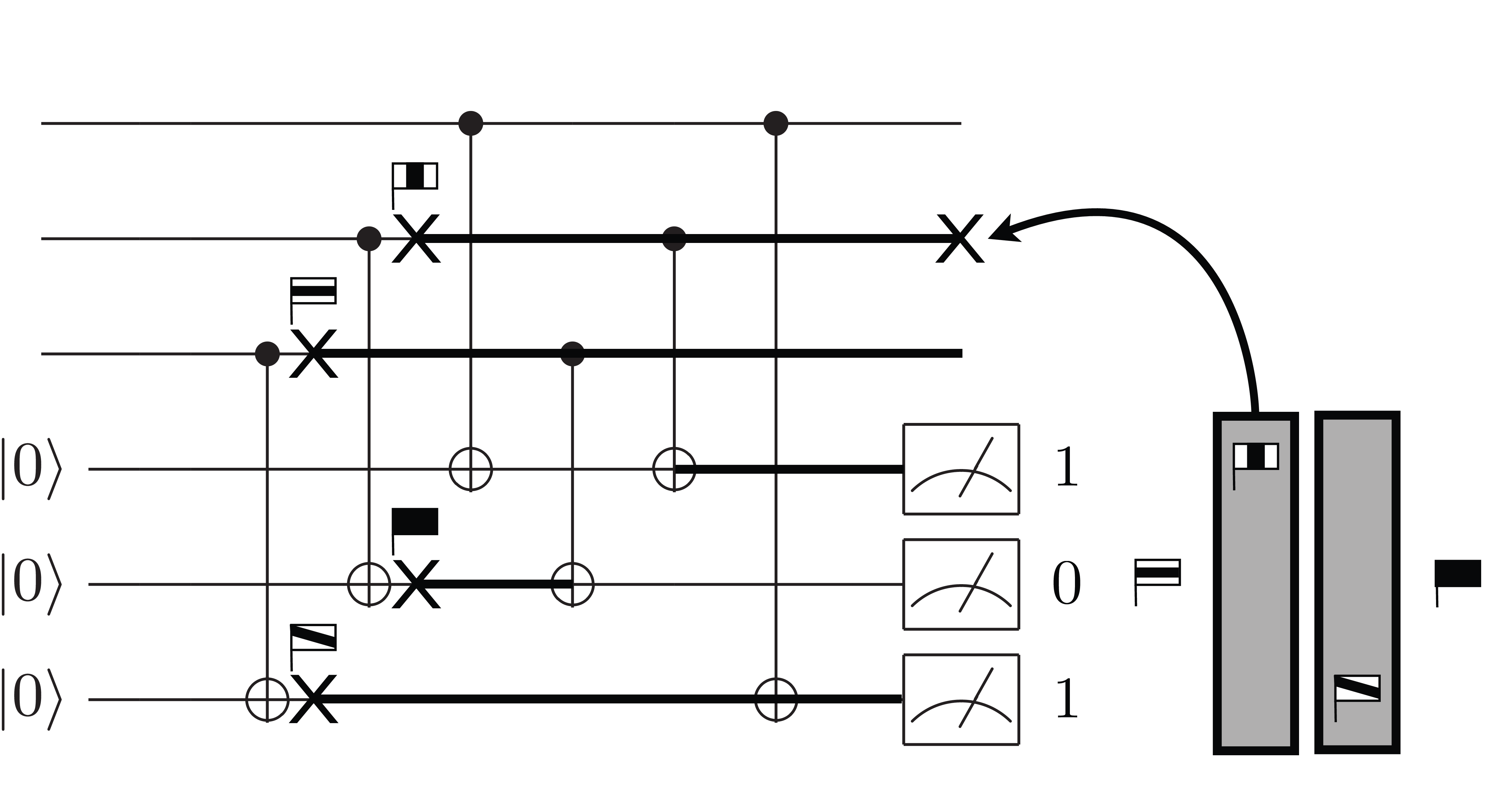}}
\end{center}
\vspace*{-10pt}
\caption{Two $FF+FF$ pairs from CNOT failures result in a two-flag match which corrects only one of the errors on the data qubits. Since all flags that appear on the ancillary qubits are cleared after error correction there will be a $UF$ leaving this EC box.}
\label{fig: MPEC5}
\end{figure}


\section{}
Having demonstrated in Appendix \ref{sec: Ap. A} that Eq.~\ref{eq: FT rule} is satisfied, in this section we aim to prove that improved failure rate scaling for MPEC extended rectangles found in Section \ref{sec: simulation} carries over to arbitrary length logical computation. For simplicity we consider only identity gates at the logical level, but it is not hard to extend this proof to an arbitrary logical circuit given standard requirements of fault tolerant circuits, which are already satisfied by the Bacon-Shor code (eg. transversal gates).

The probability that a logical circuit will fail can be bounded by the probability that each level-2 extended rectangle in the circuit has at most four physical errors.

An error correction box is defined to have succeeded if an errorless error correction cycle applied to its output state would produce the correct state. ie. if an EC box succeeded then any errors remaining on the output can be corrected. Let $a_n$ be the physical error weight of the input state to the $n^{th}$ EC box, and $b_n$ be the number of physical errors occurring inside the $n^{th}$ EC box. Let $S_n$ denote the success of the $n^{th}$ EC box. Then the success condition can be written as
\begin{equation}
	(b_{n+1} = 0 \Rightarrow a_{n+2} = 0)\Rightarrow S_n.
	\label{eq: success condition}
\end{equation}

\begin{theorem}

Given that there are no errors on the initial state and that all level-2 extended rectangles have at most four physical errors, all extended rectangles will succeed.
\end{theorem}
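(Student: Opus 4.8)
The plan is to prove the statement by induction along the chain of EC boxes, using the fault tolerance condition~\eqref{eq: FT rule} as the main tool and the success condition~\eqref{eq: success condition} for the final step. Because the logical gates are identities, the output of box $n$ (error weight $c_n$) is precisely the input of box $n+1$, so I would first record the identity $a_{n+1} = c_n$, confirming that no uncounted errors appear between consecutive boxes. I would also translate the hypothesis that every level-2 extended rectangle carries at most four physical errors into the statement $b_{n-1} + b_n \le 4$ for every consecutive pair, since for an identity gate an extended rectangle consists of just a leading and a trailing EC.

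The crux is choosing the right induction hypothesis. Rather than merely tracking $a_n \le 4$, I would prove the sharper invariant $a_n \le b_{n-1}$. The base case is immediate from the assumption of no errors on the initial state: $a_1 = 0$, and since $b_1 \le 4$ the condition~\eqref{eq: FT rule} applies to box $1$ to give $a_2 = c_1 \le b_1$. For the inductive step, assuming $a_n \le b_{n-1}$, the extended-rectangle bound gives $a_n + b_n \le b_{n-1} + b_n \le 4$, so~\eqref{eq: FT rule} applies to box $n$ and yields $a_{n+1} = c_n \le b_n$, which is exactly the invariant at index $n+1$. This is the heart of the argument: the invariant is tuned precisely so that the four-error budget of each extended rectangle keeps $a_n + b_n \le 4$, re-enabling~\eqref{eq: FT rule} at every step.

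With the invariant established, I would close the argument with the success condition~\eqref{eq: success condition}, which reduces proving $S_n$ to establishing the antecedent $b_{n+1} = 0 \Rightarrow a_{n+2} = 0$. Assuming $b_{n+1} = 0$, the invariant gives $a_{n+1} + b_{n+1} = a_{n+1} \le b_n \le 4$, so~\eqref{eq: FT rule} applies to box $n+1$ and forces $a_{n+2} = c_{n+1} \le b_{n+1} = 0$. Thus the antecedent holds and $S_n$ follows for every $n$, which is the desired conclusion.

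I expect the main obstacle to be not the induction itself, which is short, but the bookkeeping that justifies it: carefully identifying the extended rectangle of an identity gate with a consecutive pair of EC boxes so that the four-error hypothesis genuinely reads as $b_{n-1} + b_n \le 4$, and confirming that $a_{n+1} = c_n$ holds with no errors introduced between boxes beyond those counted in the $b$'s. Once those translations are pinned down, the choice of the invariant $a_n \le b_{n-1}$ (rather than a plain bound on $a_n$) is what keeps the fault tolerance condition re-applicable at each step, and the remainder is direct substitution.
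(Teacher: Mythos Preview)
Your proposal is correct and follows essentially the same induction as the paper's proof. The only cosmetic difference is that the paper states the invariant as $a_n + b_n \le 4$ rather than your $a_n \le b_{n-1}$, but each inductive step passes through both inequalities (via Eq.~\ref{eq: FT logic} and Eq.~\ref{eq: ex-rects}), so the two arguments are identical in substance.
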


\begin{proof}

No initial errors:
\begin{equation}
	a_0 = 0.
	\label{eq: initial}
\end{equation}
All extended rectangles have at most four errors, that is, the sum of the errors in any two adjacent EC boxes is no more than 4:
\begin{equation}
	b_n+b_{n+1} \leq 4.
	\label{eq: ex-rects}
\end{equation}
MPEC is fault tolerant, as described in Section \ref{sec: fault tolerance} and demonstrated in Appendix \ref{sec: Ap. A}:
\begin{equation}
	(a_n+b_n \leq 4) \Rightarrow (a_{n+1}\leq b_n).
	\label{eq: FT logic}
\end{equation}

The first extended rectangle will have no more than four errors and there are no errors in the initial state (Eq.~\ref{eq: initial} and Eq.~\ref{eq: ex-rects}), therefore
\begin{equation}
	a_0 + b_0 \leq 4.
	\label{eq: a0b0}
\end{equation}

Suppose $a_n+b_n \leq 4$, then $a_{n+1}+b_{n+1} \leq b_n + b_{n+1}$
by the fault tolerance rule, Eq.~\ref{eq: FT logic} (with $b_{n+1}$ added to both sides). Therefore, using Eq.~\ref{eq: ex-rects} we have
\begin{equation}
	(a_n+b_n \leq 4) \Rightarrow (a_{n+1}+b_{n+1}) \leq 4.
	\label{eq: induction rule}
\end{equation}
With Eq.~\ref{eq: a0b0} and Eq.~\ref{eq: induction rule},
\begin{equation}
	a_n+b_n \leq 4
\end{equation} is true by induction. With Eq.~\ref{eq: FT logic} we then have
\begin{equation}
	a_{n+1} < b_{n}
	\label{eq: a<b}
\end{equation}
for all $n$. Therefore
\begin{equation}
(b_{n+1} = 0) \Rightarrow (a_{n+2} = 0),
\end{equation}
which with Eq.~\ref{eq: success condition} means success for all EC boxes. $\Box$

\end{proof}

\end{document}